\documentclass[12pt]{article}

\usepackage{amsmath}
\usepackage{amssymb}
\usepackage{amsfonts}
\usepackage{latexsym}
\usepackage{color}

\catcode `\@=11 \@addtoreset{equation}{section}

\catcode `\@=12



  \voffset1cm

\newcommand{\be}{\begin{equation}}
\newcommand{\en}{\end{equation}}
\newcommand{\bea}{\begin{eqnarray}}
\newcommand{\ena}{\end{eqnarray}}
\newcommand{\beano}{\begin{eqnarray*}}
\newcommand{\enano}{\end{eqnarray*}}
\newcommand{\bee}{\begin{enumerate}}
\newcommand{\ene}{\end{enumerate}}

\newcommand{\mc}{\mathcal}

\newcommand{\D}{{\mc D}}

\newcommand{\Sc}{{\cal S}}
\newcommand{\G}{{\cal G}}

\newcommand{\F}{{\cal F}}

\newcommand{\Lc}{{\cal L}}

\newcommand{\1}{1 \!\! 1}
\newcommand{\N}{\mc N}

\newcommand{\Hil}{\mc H}

\newtheorem{thm}{Theorem}

\newtheorem{prop}[thm]{Proposition}
\newtheorem{defn}[thm]{Definition}

\newenvironment{proof}{\noindent {\bf Proof --}}{\hfill$\square$ \vspace{3mm}\endtrivlist}

\catcode `\@=11 \@addtoreset{equation}{section}
\catcode `\@=12

\textwidth17cm \textheight21cm

\hoffset-1.5cm \voffset-1cm

\begin{document}

\thispagestyle{empty}

\vspace*{2cm}

\begin{center}
{\Large \bf Deformed quons and bi-coherent states}   \vspace{2cm}\\

{\large F. Bagarello}\\

\vspace{3mm}
 DEIM -Dipartimento di Energia, ingegneria dell' Informazione e modelli Matematici,
\\ Scuola Politecnica, Universit\`a di Palermo, I-90128  Palermo, Italy\\
and INFN, Sezione di Napoli\\
e-mail: fabio.bagarello@unipa.it

\end{center}

\vspace*{2cm}

\begin{abstract}
\noindent

We discuss how a q-mutation relation can be deformed replacing a pair of conjugate operators with two other and unrelated operators,  as it is done in the construction of pseudo-fermions, pseudo-bosons and truncated pseudo-bosons. This deformation involves interesting mathematical problems and suggests possible applications to pseudo-hermitian quantum mechanics. We construct bi-coherent  states  associated to $\D$-pseudo-quons, and we show that they share  many of their properties with ordinary coherent states. In particular, we find conditions for these states to exist, to be eigenstates of suitable annihilation operators and to give rise to a resolution of the identity.
Two examples are discussed in details, one connected to an unbounded similarity map, and the other to a bounded map.

\end{abstract}

\vspace{2cm}


\vfill


\newpage

\section{Introduction and preliminaries}\label{sectI}

The role of commutators in quantum mechanics is essential, both because of their technical utility, for instance in the deduction of the equations of notion from a given Hamiltonian, and because they are related to the statistical properties of the elementary particles they are {\em attached } to. Along the years it has become more and more evident that the canonical commutation or anticommutation relations (CCR and CAR) are not the only ones which play a relevant role in concrete applications. For instance, interesting alternatives are provided by the truncated version of CCR considered in \cite{truncCCR}, or by the commutation rules arising in the context of the so-called generalized Heisenberg algebra, see \cite{curado} and references therein.

Another commutation rule which has been proposed several years ago is, in our opinion, particularly interesting since it intertwines between CCR and CAR. This deformation produces the so-called quons, \cite{moh,fiv,gre}. In this case one assumes the following q-mutation relation between two ladder operators $c$ and $c^\dagger$, one the adjoint of the other: $ [c,c^\dagger]_q:=c c^\dagger -q c^\dagger c
=\1$, where $q\in [-1,1]$. Of course, when $q=1$ we recover CCR, while CAR are obtained for $q=-1$.

In all these relations a crucial aspect is that the (anti)-commutators are defined between two (or more) operators which are directly connected by the adjoint map, and the consequence is that one such operator behaves as a lowering, and its adjoint as a raising operator, and the product of the two, with the raising operator on the left, is a self-adjoint number (or number-like) operator. Recently, mainly in connection with the so-called pseudo-hermitian quantum mechanics and its relatives, \cite{bagbook,mosta,ben,specissue}, it has become clear that a physical observable does not need to be self-adjoint, or even symmetric, to have a concrete meaning. In particular, non self-adjoint Hamiltonians have been considered by several authors which give rise to interesting quantum dynamical systems. Among other features, these operators have only real eigenvalues. In many of such examples a number-like operator can be introduced, which is manifestly not self-adjoint since it can be factorized as follows: $N=BA$, with $B\neq A^\dagger$. Of course, this implies that $N^\dagger\neq N$, and since $N^\dagger$ turns out to be also a number-like operator, the situation is, in a sense, richer than  when assuming that the observables are all self-adjoint. We refer to \cite{baginbagbook} for many details and examples of this situation.

In this paper we show that a similar possibility can also be considered for quons, giving rise to an extended version of the q-mutator above in which the two operators involved are not one the adjoint of the other. This has several mathematical consequences, as we will see later. We will also discuss how pairs of coherent states can be constructed and how these are connected to the lowering operators which naturally appear in our construction.

The paper is organized as follows: in the rest of this section we briefly review few standard facts on quons and on pseudo-bosons, in order to keep the paper self-contained. In Section \ref{sectDPQs} we introduce the main object of our analysis, the $\D$-pseudo quons ($\D$-PQs), while in Section \ref{sectBCs} we define the so-called bi-coherent states (BCs) associated to our $\D$-PQs, and we deduce some of their properties. Section \ref{sectex} contains two detailed examples, while our conclusions are given in Section \ref{sectconcl}.

\subsection{Basic facts on quons}\label{subsetcI1}

We briefly review here few basic facts on quons, \cite{moh,fiv,gre}, useful for our generalization.

Quons are
defined essentially by their q-mutation relation \be [c,c^\dagger]_q:=c c^\dagger -q c^\dagger c
=\1, \qquad q\in [-1,1], \label{21} \en
between the creation and the annihilation operators $c^\dagger$
and $c$, which reduces to the CCR for $q=1$ and to the CAR for $q=-1$. For  $q$ in
the interval $]-1,1[$, equation (\ref{21}) describes particles which are neither
bosons nor fermions.  For completeness we also remark that other possible q-mutation relations have also been proposed along the years, but they will not be considered here.

In \cite{moh} it is proved that the eigenstates of $N_0=c^\dagger\,c$ are analogous to the bosonic ones, except that for the normalization. A simple concrete realization of the (\ref{21}) can be deduced as follows: let $\F_e=\{e_k,\, k=0,1,2,\ldots\}$ be the canonical orthonormal (o.n.) basis in $\Hil=l^2(\Bbb N_0)$, with all zero entries except in the $(k+1)$-th position, which is equal to one: $\left<e_k,e_m\right>=\delta_{k,m}$. If we take
\be
c=\left(
    \begin{array}{ccccccc}
      0 & \beta_0 & 0 & 0 & 0 & 0 & \cdots \\
      0 & 0 & \beta_1 & 0 & 0 & 0 & \cdots \\
      0 & 0 & 0 & \beta_2 & 0 & 0 & \cdots \\
      0 & 0 & 0 & 0 & \beta_3 & 0 & \cdots \\
      0 & 0 & 0 & 0 & 0 & \beta_4  & \cdots \\
      \cdots & \cdots & \cdots & \cdots & \cdots & \cdots & \cdots \\
      \cdots & \cdots & \cdots & \cdots & \cdots & \cdots & \cdots \\
    \end{array}
  \right),
\label{21bis}\en
it follows that (\ref{21}) is satisfied if $\beta_0^2=1$ and $\beta_n^2=1+q\beta_{n-1}^2$, $n\geq1$. Then $\beta_n^2$ coincides with $n+1$ if $q=1$, and with $\frac{1-q^{n+1}}{1-q}$ if $q\neq1$. Here we will always consider $\beta_n>0$ for all $n\geq0$. Moreover $c\,e_0=0$, and $c^\dagger$ behaves as a raising operator since from (\ref{21bis}) we deduce
\be
e_{n+1}=\frac{1}{\beta_n}\,c^\dagger e_n=\frac{1}{\beta_n!}(c^\dagger)^{n+1}\,e_0,
\label{22}\en
for all $n\geq0$. Here we have introduced $\beta_n!:=\beta_n\beta_{n-1}\cdots\beta_2\beta_1$. In the literature this quantity is sometimes called the {\em $q$-factorial}. Moreover, rather than $\beta_n$, sometimes the symbol $[n]$ (or even $[n]_q)$  is used, \cite{eremel}, with $\beta_n^2=[n+1]$ . Of course, from (\ref{22}) it follows that  $c^\dagger e_n=\beta_ne_{n+1}$. Using (\ref{21bis}) it is also easy to check that $c$ acts as a lowering operator on $\F_e$: $c\,e_m=\beta_{m-1}e_{m-1}$, for all $m\in\Bbb N_0$, where we have also introduced $\beta_{-1}=0$, to ensure that $c\,e_0=0$.

 Then we have
\be
N_0e_m=\beta_{m-1}^2e_m,
\label{23}\en
for all $m\in \Bbb N_0$. The operator $N$, formally defined in \cite{moh} as follows $N=\frac{1}{\log(q)}\,\log(\1-N_0(1-q))$ for $q>0$, satisfies the eigenvalue equation $Ne_m=me_m$, for all $m\in \Bbb N_0$.

It should be stressed that the one in (\ref{21bis}) is not the only possible way to represent the operator $c$. For instance, in \cite{eremel}, the authors adopt the following representation of $c$ and $c^\dagger$ in $\Lc^2(\Bbb R)$:
\be
c=\frac{e^{-2i\alpha x}-e^{i\alpha \frac{d}{dx}}e^{-i\alpha x}}{-i\sqrt{1-e^{-2\alpha^2}}}, \qquad\qquad c^\dagger=\frac{e^{2i\alpha x}-e^{i\alpha x}e^{i\alpha \frac{d}{dx}}}{i\sqrt{1-e^{-2\alpha^2}}},
\label{23bis}\en
where $\alpha=\sqrt{-\frac{\log(q)}{2}}$ or, which is the same, $q=e^{-2\alpha^2}$. However, since $\alpha$ is assumed to belong to the set  $[0,\infty)$,  $q$ ranges in the interval $]0,1]$. Then, the representation in (\ref{23bis}) only works in this region. We will go back to this representation in Section \ref{subsectex1}, while the matrix form in (\ref{21bis}) will be used in Section \ref{subsectex2}.

\subsection{Basic facts on $\D$-PBs}\label{subsetcI2}

Another fundamental ingredient of this paper is what arises out of a different deformation of the CCR, the one giving rise to the so-called $\D$-pseudo bosons ($\D$-PBs), \cite{baginbagbook}. Here we briefly collect some of the definitions and results on $\D$-PBs which will be useful in the following.

Let $\Hil$ be a given Hilbert space with scalar product $\left<.,.\right>$ and related norm $\|.\|$. Let further $A$ and $B$ be two operators
on $\Hil$, with domains $D(A)$ and $D(B)$ respectively, $A^\dagger$ and $B^\dagger$ their adjoint, and let $\D$ be a dense subspace of $\Hil$
such that $A^\sharp\D\subseteq\D$ and $B^\sharp\D\subseteq\D$, where $X^\sharp$ is either $X$ or $X^\dagger$. Of course, $\D\subseteq D(A^\sharp)$
and $\D\subseteq D(B^\sharp)$.

\begin{defn}\label{def21}
The operators $(A,B)$ are $\D$-pseudo bosonic  if, for all $f\in\D$, we have
\be
A\,B\,f-B\,A\,f=f.
\label{A1}\en
\end{defn}

\vspace{2mm}

Our  working assumptions are the following:

\vspace{2mm}

{\bf Assumption $\D$-pb 1.--}  there exists a non-zero $\hat\varphi_{ 0}\in\D$ such that $A\,\hat\varphi_{ 0}=0$.

\vspace{1mm}

{\bf Assumption $\D$-pb 2.--}  there exists a non-zero $\hat\Psi_{ 0}\in\D$ such that $B^\dagger\,\hat\Psi_{ 0}=0$.

\vspace{2mm}

Then, if $(A,B)$ satisfy Definition \ref{def21}, it is obvious that $\hat\varphi_0\in D^\infty(B):=\cap_{k\geq0}D(B^k)$ and that $\hat\Psi_0\in D^\infty(A^\dagger)$, so
that the vectors \be \hat\varphi_n:=\frac{1}{\sqrt{n!}}\,B^n\hat\varphi_0,\qquad \hat\Psi_n:=\frac{1}{\sqrt{n!}}\,{A^\dagger}^n\hat\Psi_0, \label{A2a}\en
$n\geq0$, can be defined and they all belong to $\D$. As a consequence, they belong to the domains of $A^\sharp$, $B^\sharp$ and $\N^\sharp$, where $\N=BA$. We call $\F_{\hat\Psi}=\{\hat\Psi_{ n}, \,n\geq0\}$ and
$\F_{\hat\varphi}=\{\hat\varphi_{ n}, \,n\geq0\}$.

It is now simple to deduce the following lowering and raising relations:
\be
\left\{
    \begin{array}{ll}
B\,\hat\varphi_n=\sqrt{n+1}\hat\varphi_{n+1}, \qquad\qquad\quad\,\, n\geq 0,\\
A\,\hat\varphi_0=0,\quad A\hat\varphi_n=\sqrt{n}\,\hat\varphi_{n-1}, \qquad\,\, n\geq 1,\\
A^\dagger\hat\Psi_n=\sqrt{n+1}\hat\Psi_{n+1}, \qquad\qquad\quad\, n\geq 0,\\
B^\dagger\hat\Psi_0=0,\quad B^\dagger\hat\Psi_n=\sqrt{n}\,\hat\Psi_{n-1}, \qquad n\geq 1,\\
       \end{array}
        \right.
\label{A3}\en as well as the eigenvalue equations $\N\hat\varphi_n=n\hat\varphi_n$ and  $\N^\dagger\hat\Psi_n=n\hat\Psi_n$, $n\geq0$. In particular, as a consequence
of these  last two equations,  choosing the normalization of $\hat\varphi_0$ and $\hat\Psi_0$ in such a way $\left<\hat\varphi_0,\hat\Psi_0\right>=1$ is satisfied, we deduce that
\be \left<\hat\varphi_n,\hat\Psi_m\right>=\delta_{n,m}, \label{A4}\en
 for all $n, m\geq0$. Hence $\F_{\hat\Psi}$ and $\F_{\hat\varphi}$ are biorthogonal. Our third assumption is the following:

\vspace{2mm}

{\bf Assumption $\D$-pb 3.--}  $\F_{\hat\varphi}$ is a basis for $\Hil$.

\vspace{1mm}

This is equivalent to requiring that $\F_{\hat\Psi}$ is a basis for $\Hil$ as well, \cite{chri}. However, several  physical models suggest to adopt the following weaker version of this assumption, \cite{baginbagbook}:

\vspace{2mm}

{\bf Assumption $\D$-pbw 3.--}  For some subspace $\G$ dense in $\Hil$, $\F_{\hat\varphi}$ and $\F_{\hat\Psi}$ are $\G$-quasi bases.

\vspace{2mm}
This means that, for all $f$ and $g$ in $\G$,
\be
\left<f,g\right>=\sum_{n\geq0}\left<f,\hat\varphi_n\right>\left<\hat\Psi_n,g\right>=\sum_{n\geq0}\left<f,\hat\Psi_n\right>\left<\hat\varphi_n,g\right>,
\label{A4b}
\en
which can be seen as a weak form of the resolution of the identity, restricted to $\D$. The role, and the necessity, of these sets is discussed in details in \cite{baginbagbook}, and in some more recent papers, \cite{bagadd1,bagadd2,bagadd3}.

As suggested by concrete physical systems, in \cite{baginbagbook} we have also assumed that a self-adjoint, invertible, operator $\Theta$, which leaves, together with $\Theta^{-1}$, $\D$ invariant, exists: $\Theta\D\subseteq\D$, $\Theta^{-1}\D\subseteq\D$. Then we say that $(A,B^\dagger)$ are $\Theta-$conjugate if $Af=\Theta^{-1}B^\dagger\,\Theta\,f$, for all $f\in\D$. One can prove that, if $\F_{\hat\varphi}$ and $\F_{\hat\Psi}$ are $\D$-quasi bases for $\Hil$, then the operators $(A,B^\dagger)$ are $\Theta-$conjugate if and only if $\hat\Psi_n=\Theta\hat\varphi_n$, for all $n\geq0$. Moreover, if $(A,B^\dagger)$ are $\Theta-$conjugate, then $\left<f,\Theta f\right>>0$ for all non zero $f\in \D$.

We refer to \cite{baginbagbook} for more results on $\Theta$, which sometimes is also used to define different scalar products in $\Hil$, on  $\D$-PBs, and for some explicit examples of their physical appearances in the quantum mechanical literature.

\section{$\D$-pseudo quons}\label{sectDPQs}

The aim of this section is to show that the two deformations considered in Section \ref{sectI} can be somehow {\em merged}, giving rise to what we call $\D$-pseudo quons, ($\D$-PQs). The procedure is very close to that for ordinary bosons, but requires some cares mainly for normalization problems, but not only.

Let $\Hil$ be a given Hilbert space with scalar product $\left<.,.\right>$ and related norm $\|.\|$. Let further $a$ and $b$ be two operators
on $\Hil$, with domains $D(a)$ and $D(b)$ respectively, $a^\dagger$ and $b^\dagger$ their adjoint, and let $\D$ be a dense subspace of $\Hil$
such that $a^\sharp\D\subseteq\D$ and $b^\sharp\D\subseteq\D$. As in Section \ref{subsetcI2}, this implies, in particular, that $\D\subseteq D(a^\sharp)$
and $\D\subseteq D(b^\sharp)$.

\begin{defn}\label{def31}
The operators $(a,b)$ are $\D$-pseudo quonic  if, for all $f\in\D$, we have
\be
[a,b]_qf:=a\,b\,f-q \,b\,a\,f=f,
\label{31}\en
for some real $q$.
\end{defn}

\vspace{2mm}

Notice that, for the time being, we are not interested in restricting the values of $q$ to the closed interval $[-1,1]$. Our initial working assumptions coincide with those for $\D$-PBs:

\vspace{2mm}

{\bf Assumption $\D$-pq 1.--}  there exists a non-zero $\varphi_{ 0}\in\D$ such that $a\,\varphi_{ 0}=0$.

\vspace{1mm}

{\bf Assumption $\D$-pq 2.--}  there exists a non-zero $\Psi_{ 0}\in\D$ such that $b^\dagger\,\Psi_{ 0}=0$.

\vspace{2mm}

Then, if $(a,b)$ satisfy Definition \ref{def31}, it is obvious that the vectors \be \varphi_n:=\frac{1}{{\beta_{n-1}!}}\,b^n\varphi_0=\frac{1}{\beta_{n-1}}\,b\,\varphi_{n-1},\qquad \Psi_n:=\frac{1}{{\beta_{n-1}!}}\,{a^\dagger}^n\Psi_0=\frac{1}{{\beta_{n-1}}}\,{a^\dagger}\,\Psi_{n-1}, \label{A2}\en
$n\geq1$, can be defined and they all belong to $\D$ and, as a consequence, to the domains of $a^\sharp$, $b^\sharp$ and $N^\sharp$, where $N=ba$. Here $\beta_n$ are those introduced in Section \ref{subsetcI1}, for ordinary quons. Remember that each $\beta_n$ is strictly positive. We further define $\F_\Psi=\{\Psi_{ n}, \,n\geq0\}$ and
$\F_\varphi=\{\varphi_{ n}, \,n\geq0\}$.

Then the following lowering and raising relations arise, which replace those in (\ref{A3}):
\be
\left\{
    \begin{array}{ll}
b\,\varphi_n=\beta_n\varphi_{n+1}, \qquad\qquad\qquad\qquad\qquad\quad\,\, n\geq 0,\\
a\,\varphi_0=0,\quad a\varphi_n=\beta_{n-1}\,\varphi_{n-1}, \qquad\qquad\quad\,\, n\geq 1,\\
a^\dagger\Psi_n=\beta_n\Psi_{n+1}, \qquad\qquad\qquad\qquad\quad\quad\quad n\geq 0,\\
b^\dagger\Psi_0=0,\quad b^\dagger\Psi_n=\beta_{n-1}\,\Psi_{n-1}, \qquad\quad\qquad n\geq 1.\\
       \end{array}
        \right.
\label{32}\en
In a more compact way we can rewrite the second and the fourth of these equalities as follows: $a\varphi_n=\beta_{n-1}\,\varphi_{n-1}$ and $b^\dagger\Psi_n=\beta_{n-1}\,\Psi_{n-1}$, with the agreement that $\Psi_{-1}=\varphi_{-1}$ are the zero vectors. Recall also that $\beta_{-1}=0$. Then, from (\ref{32}), using the stability of the set $\D$ under the action of $b$ and $a^\dagger$, we deduce the eigenvalue equations
$N\varphi_n=\beta_{n-1}\varphi_n$ and  $N^\dagger\Psi_n=\beta_{n-1}\Psi_n$, $n\geq0$ which show that $N$ and $N^\dagger$ are isospectrals. This suggests the existence of some intertwining operator between $N$ and $N^\dagger$. This will be discussed later. Furthermore,  choosing the normalization of $\varphi_0$ and $\Psi_0$ in such a way $\left<\varphi_0,\Psi_0\right>=1$, we conclude, as in Section \ref{subsetcI2}, that
\be \left<\varphi_n,\Psi_m\right>=\delta_{n,m}, \label{33}\en
 for all $n, m\geq0$. Hence $\F_\Psi$ and $\F_\varphi$ are biorthogonal. Our third assumption is the following:

\vspace{2mm}

{\bf Assumption $\D$-pq 3.--}  $\F_\varphi$ is a basis for $\Hil$,

\vspace{1mm}

(which implies that $\F_\Psi$ is a basis for $\Hil$ as well, \cite{chri}), or its weaker version:

\vspace{2mm}

{\bf Assumption $\D$-pqw 3.--}  For some subspace $\G$ dense in $\Hil$, $\F_\varphi$ and $\F_\Psi$ are $\G$-quasi bases.

\vspace{2mm}
As in \cite{baginbagbook} we assume here that a self-adjoint, invertible, operator $\Theta$, which leaves, together with $\Theta^{-1}$, $\D$ invariant, exists: $\Theta\D\subseteq\D$, $\Theta^{-1}\D\subseteq\D$. Then  $(a,b^\dagger)$ are $\Theta-$conjugate if $af=\Theta^{-1}b^\dagger\,\Theta\,f$, for all $f\in\D$. Despite of the differences between (\ref{21}) and (\ref{31}),  one can prove again that, if $\F_\varphi$ and $\F_\Psi$ are $\D$-quasi bases for $\Hil$, then the operators $(a,b^\dagger)$ are $\Theta-$conjugate if and only if $\Psi_n=\Theta\varphi_n$, for all $n\geq0$. Moreover, under the same assumption, we easily see that $N^\dagger\Theta\varphi_n=\Theta N\varphi_n$ for all $n$, or, more in general, that $\left(N^\dagger\Theta-\Theta N\right)f=0$ for all $f\in \Lc_\varphi$, the linear span of the $\varphi_n$'s, which is dense in $\Hil$ under either Assumption $\D$-pq 3 or its $\D$-pqw 3 version. This is a weak form of the intertwining relation $N^\dagger\Theta=\Theta N$, which could be further established in all of $\Hil$ if $\Theta$ and $N$ are both bounded. Hence we conclude that $\D$-PBs and $\D$-PQs are not really different from many points of view.

\vspace{2mm}

{\bf Remark:--} If $q\in [-1,1[$ then the operator $c$ in Section \ref{subsetcI1} turns out to be bounded, while it is not so if $q\geq1$, \cite{bagadd4}. This suggests that the situation for $a$ and $b$ here is even more complicated. In fact, a simple estimate shows that
$$
\|a\|^2=\sup_{\|f\|=1}\|af\|^2\geq \beta^2_{n-1}\left(\frac{\|\varphi_{n-1}\|}{\|\varphi_n\|}\right)^2,
$$
and the right-hand side can be convergent or divergent with $n$ depending on the value of $q$ and of the ratio $\frac{\|\varphi_{n-1}\|}{\|\varphi_n\|}$. Whenever $\|a\|$ and $\|b\|$ are finite, we could take $\D$ coinciding with $\Hil$. But this choice in general is not enough to ensure that the Assumptions $\D$-pq1, $\D$-pq2 and $\D$-pq3 are valid, and they must be checked in explicit situations.

\vspace{2mm}

In view of this Remark, also $\Theta$ and $\Theta^{-1}$ could be unbounded, in principle. This is what happens, for instance, when $\F_\varphi$ and $\F_\Psi$ are not Riesz bases, \cite{baginbagbook}. In this case an explicit expression for these operator is
$$
D(\Theta)=\{f\in\Hil: \,\sum_n\left<\Psi_n,f\right>\Psi_n\in\Hil\}, \qquad D(\Theta^{-1})=\{h\in\Hil: \,\sum_n\left<\varphi_n,h\right>\varphi_n\in\Hil\},
$$
and
\be
\Theta f=\sum_n\left<\Psi_n,f\right>\Psi_n, \qquad \Theta^{-1}h=\sum_n\left<\varphi_n,h\right>\varphi_n,
\label{38}\en
for all $f\in D(\Theta)$ and $h\in D(\Theta^{-1})$. We will discuss in a moment in which sense $\Theta^{-1}$ defined here is the inverse of $\Theta$. First we notice that each $\varphi_n\in D(\Theta)$, while each $\Psi_n\in D(\Theta^{-1})$, and that $\Theta\varphi_n=\Psi_n$, $\Theta^{-1}\Psi_n=\varphi_n$, for all $n$. Hence, if $\F_\varphi$ and $\F_\Psi$ are at least complete in $\Hil$, $\Theta$ and $\Theta^{-1}$ are densely defined. This is because, for instance, $D(\Theta)$ contains $\Lc_\varphi$, which is dense in $\Hil$.

Finally, if $(\F_\varphi,\F_\Psi)$ are $\D$-quasi bases, using the continuity of the scalar product we can prove that
\be
\left<\Theta f,\Theta^{-1}h\right>=\left<\sum_n\left<\Psi_n,f\right>\Psi_n,\Theta^{-1}h\right>=\sum_n\left<f,\Psi_n\right>
\left<\Psi_n,\Theta^{-1}h\right>=\sum_n\left< f,\Psi_n\right>\left<\varphi_n,h\right>=\left<f,h\right>,
\label{34}\en
$f,g\in\D$, if $\D\subseteq D(\Theta)\cap D(\Theta^{-1})$. Then we have $\left< f,\Theta \Theta^{-1}h\right>=\left<f,h\right>$, and, using the density of $\D$ in $\Hil$,  $\Theta \Theta^{-1}h=h$ for all $h\in\D$. In a similar way, we prove that $\Theta^{-1} \Theta h=h$. Hence, the operator $\Theta^{-1}$ defined as in (\ref{38}) is really the inverse of $\Theta$, as stated, at least in $\D$. Of course, if both $\Theta$ and $\Theta^{-1}$ are bounded,  equation (\ref{34}) can be extended to all of $\Hil$.

\section{Bicoherent states}\label{sectBCs}

During recent years several possible definitions of coherent states have been proposed for quons, \cite{eremel}, \cite{kar}-\cite{baz}. More recently, coherent states related to $\D$-PBs have also been proposed and studied, \cite{bagcs1}-\cite{bagcs3}. They have been called bi-coherent states (BCs) because they always {\em go in pairs}. This is not really surprising, since $\D$-PBs admit two different lowering operators, see $A$ and $B^\dagger$ in (\ref{A3}). In this section we propose a definition of BCs for our $\D$-PQs. The conclusion of our analysis will be that replacing the quons commutation rule with its deformed version will cause no particular problem, meaning with this that most of the properties of coherent states will be recovered (in an extended version).

To avoid useless complication we will restrict here to $0<q<1$.  The following Proposition, which adapts to quons the results discussed in \cite{bagcs2,bagcs3}, holds:

\begin{prop}\label{thm1} Let $(\F_\varphi,\F_\Psi)$ be $\D$-quasi bases for $\Hil$, and let  us assume that there exist four positive constants $A_\varphi, A_\Psi, r_\varphi, r_\Psi >0$, and two positive sequences $M_n(\varphi), M_n(\Psi)>0$, such that $\|\varphi_n\|\leq A_\varphi r_\varphi^n M_n(\varphi)$ and $\|\Psi_n\|\leq A_\Psi r_\Psi^n M_n(\Psi)$, for all $n\geq0$. Suppose further that  $\lim_{n,\infty}\frac{M_n(\varphi)}{M_{n+1}(\varphi)}=M(\varphi)$ and $\lim_{n,\infty}\frac{M_n(\Psi)}{M_{n+1}(\Psi)}=M(\Psi)$, with $0<M(\varphi), M(\Psi)<\infty$. Let us define
\be
N(|z|)=\left(\sum_{k=0}^\infty\frac{|z|^{2k}}{(\beta_{k-1}!)^2}\right)^{-1/2},
\label{41}\en
and
\be
\varphi(z)=N(|z|) \sum_{k=0}^\infty\frac{z^{k}}{\beta_{k-1}!}\varphi_k, \qquad \Psi(z)=N(|z|) \sum_{k=0}^\infty\frac{z^{k}}{\beta_{k-1}!}\Psi_k.
\label{42}\en
Let
$$
\rho_\varphi = \min\left\{\frac{1}{\sqrt{1-q}},\frac{M(\varphi)}{r_\varphi}\,\frac{1}{\sqrt{1-q}}\right\},\qquad \rho_\Psi = \min\left\{\frac{1}{\sqrt{1-q}},\frac{M(\Psi)}{r_\Psi}\,\frac{1}{\sqrt{1-q}}\right\},
$$
and $\rho:=\min\left\{\rho_\varphi, \rho_\Psi\right\}$. Then $\varphi(z)$ (resp. $\Psi(z)$) is well defined for $|z|<\rho_\varphi$ (resp. $|z|<\rho_\Psi$). Moreover, for $|z|<\rho$, $\left<\varphi(z),\Psi(z)\right>=1$, $a\varphi(z)=z\,\varphi(z)$ and $b^\dagger\Psi(z)=z\,\Psi(z)$. Also, if a measure $d\lambda(r)$ exists such that
$\int_0^\rho d\lambda(r) r^{2k}=\frac{\beta_{k-1}^2!}{2\pi}$, for all $k\geq0$, then, calling $d\nu(z,\overline{z})=d\lambda(r)\, d\theta$, we have
\be
\int_{C_\rho(0)} d\nu(z,\overline{z}) N(|z|)^{-2}\left<f,\varphi(z)\right>\left<\Psi(z),g\right>=\left<f,g\right>,
\label{43}\en
for all $f,g\in\D$,  where $C_\rho(0)$ is the circle centered in the origin of the complex plane and of radius $\rho$.

\end{prop}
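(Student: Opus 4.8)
The plan is to prove the five assertions in turn, all of them reducing to power-series manipulations controlled by the asymptotics of the $q$-factorials $\beta_{k-1}!$ together with the growth bounds on $\|\varphi_n\|$ and $\|\Psi_n\|$. The starting observation is that, since $0<q<1$, one has $\beta_n^2=\frac{1-q^{n+1}}{1-q}\to\frac{1}{1-q}$, so that $\beta_k!/\beta_{k-1}!=\beta_k\to\frac{1}{\sqrt{1-q}}$. First I would settle well-definedness. By the ratio test the series $N(|z|)^{-2}=\sum_k\frac{|z|^{2k}}{(\beta_{k-1}!)^2}$ has consecutive-term ratio $\frac{|z|^2}{\beta_k^2}\to|z|^2(1-q)$, hence radius of convergence $\frac{1}{\sqrt{1-q}}$, making $N(|z|)$ a well-defined positive number for $|z|<\frac{1}{\sqrt{1-q}}$. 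For the vector series defining $\varphi(z)$ I would estimate $\left\|\sum_k\frac{z^k}{\beta_{k-1}!}\varphi_k\right\|\le A_\varphi\sum_k\frac{(|z|r_\varphi)^k}{\beta_{k-1}!}M_k(\varphi)$ and apply the ratio test again: the coefficient ratio tends to $\frac{r_\varphi}{\beta_k}\frac{M_{k+1}(\varphi)}{M_k(\varphi)}\to\frac{r_\varphi\sqrt{1-q}}{M(\varphi)}$, giving convergence for $|z|<\frac{M(\varphi)}{r_\varphi}\frac{1}{\sqrt{1-q}}$. Intersecting the two conditions yields exactly $|z|<\rho_\varphi$; the same argument with $\Psi$ gives $|z|<\rho_\Psi$.

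For normalization, on $|z|<\rho$ both series converge absolutely, so I may expand the scalar product and use biorthogonality (\ref{33}): $\left<\varphi(z),\Psi(z)\right>=N(|z|)^2\sum_{k,l}\frac{\overline{z}^k z^l}{\beta_{k-1}!\beta_{l-1}!}\left<\varphi_k,\Psi_l\right>=N(|z|)^2\sum_k\frac{|z|^{2k}}{(\beta_{k-1}!)^2}=1$. For the eigenvalue equations I would act term by term with $a$, using $a\varphi_k=\beta_{k-1}\varphi_{k-1}$ from (\ref{32}) together with the identity $\beta_{k-1}/\beta_{k-1}!=1/\beta_{k-2}!$; after the index shift $j=k-1$ the series reassembles to $z\,\varphi(z)$. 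The one point needing care is the passage of $a$ through the infinite sum: each partial sum lies in $\D\subseteq D(a)$, and both the partial sums and their $a$-images converge, so closedness of $a$ gives $\varphi(z)\in D(a)$ and $a\varphi(z)=z\varphi(z)$ (if $a$ is only closable, one reads the equation for its closure). The computation for $b^\dagger\Psi(z)=z\Psi(z)$ is identical, using the fourth line of (\ref{32}).

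The resolution of the identity is the core of the statement. Writing $z=re^{i\theta}$ and observing that the factor $N(|z|)^{-2}$ cancels the two normalizations, the integrand reduces to $F(z):=\sum_{k,l}\frac{z^k\overline{z}^l}{\beta_{k-1}!\beta_{l-1}!}\left<f,\varphi_k\right>\left<\Psi_l,g\right>$. I would first perform the angular integration, which annihilates every off-diagonal term since $\int_0^{2\pi}e^{i(k-l)\theta}\,d\theta=2\pi\delta_{k,l}$, and then invoke the moment condition $\int_0^\rho r^{2k}\,d\lambda(r)=\frac{(\beta_{k-1}!)^2}{2\pi}$ to collapse the radial integral of the diagonal, obtaining $\sum_k\left<f,\varphi_k\right>\left<\Psi_k,g\right>$. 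By the $\D$-quasi basis property, which is the analogue of (\ref{A4b}) for $(\F_\varphi,\F_\Psi)$, this equals $\left<f,g\right>$ for all $f,g\in\D$, which is the claim.

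The main obstacle is the rigorous justification of interchanging the double summation with the integration over $C_\rho(0)$, the delicate feature being that the series defining $F$ converges only for $|z|<\rho$ while the integral runs up to $r=\rho$. I would attempt this by Fubini--Tonelli, bounding $|\left<f,\varphi_k\right>|\le\|f\|\,A_\varphi r_\varphi^k M_k(\varphi)$ and $|\left<\Psi_l,g\right>|\le\|g\|\,A_\Psi r_\Psi^l M_l(\Psi)$ and controlling the mixed radial moments by Cauchy--Schwarz in $d\lambda$, $\int_0^\rho r^{k+l}\,d\lambda(r)\le\bigl(\int_0^\rho r^{2k}\,d\lambda\bigr)^{1/2}\bigl(\int_0^\rho r^{2l}\,d\lambda\bigr)^{1/2}=\frac{\beta_{k-1}!\,\beta_{l-1}!}{2\pi}$, so that the integral of $|F|$ is dominated by $\frac{1}{2\pi}\bigl(\sum_k|\left<f,\varphi_k\right>|\bigr)\bigl(\sum_l|\left<\Psi_l,g\right>|\bigr)$. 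Establishing the finiteness of these last sums — that is, the absolute summability of the expansion coefficients on $\D$ — is exactly where the growth hypotheses and the quasi-basis structure must be combined, and it is the step I expect to demand genuine work rather than bookkeeping, whereas the remaining manipulations are routine once the interchange is secured.
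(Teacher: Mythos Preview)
Your proposal is correct and follows essentially the same route as the paper: the convergence radii, the normalization, and the eigenvalue equations are handled identically via the ratio test, biorthogonality, and the lowering relations in (\ref{32}), while for the resolution of the identity the paper gives no details at all, simply declaring it ``a minor extension of the one given in \cite{bagcs2}.'' Your treatment is in fact more explicit than the paper's, both in flagging the closability issue behind the term-by-term application of $a$ and in isolating the Fubini--Tonelli interchange (and the attendant summability of the coefficients) as the genuine analytic content of (\ref{43}).
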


\begin{proof}
From (\ref{42}) it is evident that  $\varphi(z)$ and $\Psi(z)$ are well defined only inside that region of $\Bbb C$ where, first of all, $N(|z|)$ is defined. A standard computation shows that the series for $N(|z|)$ converges if $|z|^2<\lim_{k,\infty}\beta_k^2=\frac{1}{1-q}$, i.e. if $|z|<\frac{1}{\sqrt{1-q}}$.

Now, since by assumption $\|\varphi_n\|\leq A_\varphi r_\varphi^n M_n(\varphi)$, we get
$$
\|\varphi(z)\|\leq \left|N(|z|)\right|A_\varphi\sum_{n=0}^\infty
\left(\frac{M_n(\varphi)}{\beta_{n-1}!}\right)|zr_\varphi|^n.
$$
This power series converges if
$$
|zr_\varphi|<\lim_{n,\infty}\frac{M_n(\varphi)}{\beta_{n-1}!}\,\frac{\beta_{n}!}{M_{n+1}(\varphi)}=
\lim_{n,\infty}\beta_n\frac{M_n(\varphi)}{M_{n+1}(\varphi)}=\lim_{n,\infty}\sqrt{\frac{1-q^{n+1}}{1-q}}\,
\frac{M_n(\varphi)}{M_{n+1}(\varphi)}=\frac{M(\varphi)}{\sqrt{1-q}},
$$
i.e. if $|z|<\frac{M(\varphi)}{r_\varphi\sqrt{1-q}}$. Notice that, depending on the value of $\frac{M(\varphi)}{r_\varphi}$, this quantity defines a  circle of convergence which is bigger or  smaller than the one in which $N(|z|)$ converges. Hence, to be sure that $\varphi(z)$ is well defined, we have to introduce $\rho_\varphi = \min\left\{\frac{1}{\sqrt{1-q}},\frac{M(\varphi)}{r_\varphi\sqrt{1-q}}\right\}$, and to consider only those $z\in\Bbb C$ with $|z|<\rho_\varphi$.

In the same way we can check that $\Psi(z)$ is well defined if $|z|<\rho_\Psi$.

The fact that $a\varphi(z)=z\,\varphi(z)$ and $b^\dagger\Psi(z)=z\,\Psi(z)$, when they are defined, follows from the lowering equations for $a$ and $b^\dagger$ in (\ref{32}): $a\varphi_n=\beta_{n-1}\,\varphi_{n-1}$ and $b^\dagger\Psi_n=\beta_{n-1}\,\Psi_{n-1}$. Definition (\ref{41}) for $N(|z|)$, and the biorthogonality of $\F_\varphi$ and $\F_\Psi$, easily imply that $\left<\varphi(z),\Psi(z)\right>=1$ for all $|z|<\rho$.

The proof of the resolution of the identity is a minor extension of the one given in \cite{bagcs2}.

\end{proof}

Even if the idea of the construction of the BCs is apparently close to that discussed in \cite{bagcs2}, it should be  stressed that, while in \cite{bagcs2} the ladder operators were defined ad-hoc out of two biorthogonal sets of vectors, here these operators are the fundamental building blocks of the entire construction: we are reversing the starting point, using the operators to construct the relevant vectors rather than going the other way around.

Examples of sequences $M_n(\varphi)$ and $M_n(\Psi)$ which could be used in the estimates above are the following: (i) any  constant sequence $M_n(\varphi)=M$, $M>0$. In this case $M(\varphi)=1$; (ii) $M_n(\varphi)=M^n$, for some $M>0$. In this case $M(\varphi)=\frac{1}{M}$; (iii), $M_n(\varphi)=\gamma_n!$ for any sequence $\{\gamma_n\neq0\}$ converging to some $\hat\gamma\in]0,\infty[$. In this case $M(\varphi)=\frac{1}{\hat\gamma}$. Notice that, in particular, this is what happens if we take $\gamma_n=\beta_n$, while it is not the case if we take $\gamma_n=n$.

As for the uncertainty relation, this is saturated in the following (generalized) version: if we put $Q=\frac{1}{\sqrt{2}}(b+a)$ and $P=\frac{i}{\sqrt{2}}(b-a)$, and we call $\left<T\right>:=\left<\Psi(z),T\varphi(z)\right>$, for all operator $T$ with $\varphi(z)\in D(T)$, we deduce that
$$
\Delta Q\,\Delta P=\frac{1}{2}\left(|z|^2(q-1)+1\right).
$$
Here $\Delta Q$ and $\Delta P$ are defined as in $(\Delta T)^2=\left<T^2\right>-\left<T\right>^2$, for $T=Q,P$. Notice that this equality reduces to the standard one, $\Delta Q\,\Delta P=\frac{1}{2}$, when $q=1$, i.e. when we go back to $\D$-PBs.

\section{Examples}\label{sectex}

In this section we consider two examples of $\D-$PQs and of their related BCs. In the first case we will extend the example originally discussed in \cite{eremel}, see (\ref{23bis}), while the second is a deformation of the quons defined in $\Hil=l^2(\Bbb N_0)$ as in (\ref{21bis}).

\subsection{A deformation of the model by Eremin and Meldianov}\label{subsectex1}

As we have seen in the Introduction, it is possible to represent the quonic operators $c$ and $c^\dagger$ in $\Lc^2(\Bbb R)$ in terms of the multiplication and of the derivative operators $x$ and $\frac{d}{dx}$ as follows:
\be
c=\frac{e^{-2i\alpha x}-e^{i\alpha \frac{d}{dx}}e^{-i\alpha x}}{-i\sqrt{1-e^{-2\alpha^2}}}, \quad c^\dagger=\frac{e^{2i\alpha x}-e^{i\alpha x}e^{i\alpha \frac{d}{dx}}}{i\sqrt{1-e^{-2\alpha^2}}}.
\label{51}\en
Here $\alpha=\sqrt{-\frac{\log(q)}{2}}$ or, which is the same, $q=e^{-2\alpha^2}$. We remind that $\alpha\in[0,\infty)$, so that $q\in]0,1]$. It is not hard to deform $c$ in order to obtain operators satisfying (\ref{31}). For that, let $\gamma\in\Bbb R$ be a fixed real parameter, and let us introduce the operators
\be
a=\frac{e^{-2i\alpha x}-e^{i\alpha \frac{d}{dx}}e^{-i\alpha (x+\gamma)}}{-i\sqrt{1-e^{-2\alpha^2}}}, \quad b=\frac{e^{2i\alpha x}-e^{i\alpha (x-\gamma)}e^{i\alpha \frac{d}{dx}}}{i\sqrt{1-e^{-2\alpha^2}}}.
\label{52}\en
Of course, $(a,b)$ collapse to $(c,c^\dagger)$ if $\gamma=0$, but, if $\gamma\neq0$, then $b\neq a^\dagger$. Incidentally, it might be useful to stress that even this simple dependence on $\gamma$  can have non trivial consequences, as we have seen several times for $\D$-PBs, \cite{baginbagbook}. In those cases, in fact, it often happens that the set of eigenvectors of the two conjugated number-like operators are no longer bases for the Hilbert space where the model ilives, while they are still complete. For this reason the example we are going to discuss can be interesting, as it will appear clear in the following.

First of all, we want to show that $(a,b)$ are $\D$-PQs in the sense of Definition \ref{def31}, with $\D$ the dense subset of $\Lc^2(\Bbb R)$ already defined as in \cite{baginbagbook}, Section 3.3.1.3:
\be\D=\{f(x)\in \Sc(\Bbb R):\,e^{kx}f(x)\in\Sc(\Bbb R), \, \forall k\in \Bbb
R\}.\label{53}\en
First we observe that $\D$ is stable under the action of $a$, $b$ and their adjoint. Then, a direct computation shows first that, for all $f(x)\in\D$, $([a,b]_qf)(x)=f(x)$. Hence $a$ and $b$ satisfy condition (\ref{31}). The vacua of $a$ and $b^\dagger$ can be easily deduced, extending what is done in \cite{eremel}:
\be
\varphi_0(x)=\frac{1}{\pi^{1/4}}e^{-\frac{x^2}{2}+x\left(\gamma+\frac{3}{2}\,i\alpha\right)}, \qquad \Psi_0(x)=\frac{1}{\pi^{1/4}}e^{-\frac{x^2}{2}+x\left(-\gamma+\frac{3}{2}\,i\alpha\right)},
\label{54}\en
and $a\,\varphi_0(x)=b^\dagger\Psi_0(x)=0$. Also, $\varphi_0(x),\Psi_0(x)\in\D$, and the normalization is chosen to ensure that $\left<\varphi_0,\Psi_0\right>=1$.
Now, since $\D$ is stable under the action of both $b$ and $a^\dagger$, the vectors $\varphi_n$ and $\Psi_n$ can be defined as in (\ref{A2}), and we have $\varphi_n=\frac{1}{{\beta_{n-1}!}}\,b^n\varphi_0$ and $ \Psi_n=\frac{1}{{\beta_{n-1}!}}\,{a^\dagger}^n\Psi_0$, for all
$n\geq1$. Notice that it is enough to compute the vectors $\varphi_n(x)$, since the functions $\Psi_n(x)$ can easily be deduced by these ones simply replacing $\gamma$ with $-\gamma$. This is a consequence of the analytic expressions of $\varphi_0(x)$ and $\Psi_0(x)$, and of the relation between $a^\dagger$ and $b$. A direct computation shows that, for instance,
$$
\varphi_1(x)=\frac{-i}{\sqrt{1-e^{-2\alpha^2}}}\,\varphi_0(x)\left(e^{2i\alpha x}-e^{-\alpha^2}\right),
$$
$$
\varphi_2(x)=\frac{1}{\beta_1}\left(\frac{-i}{\sqrt{1-e^{-2\alpha^2}}}\right)^2\,\varphi_0(x)\left(e^{4i\alpha x}-e^{2i\alpha x-\alpha^2}-e^{2i\alpha x-3\alpha^2}+e^{-2\alpha^2}\right),
$$
and so on. For generic $n$ the function $\varphi_n(x)$ can be written as

$$
\varphi_n(x)=\frac{1}{\beta_{n-1}!}\,\left(\frac{-i}{\sqrt{1-e^{-2\alpha^2}}}\right)^n\,\varphi_0(x)\sum_{k=0}^nc_k^{(n)}e^{2i\alpha k x},
$$
where $c_k^{(n)}$ are suitable coefficients which should be computed for each $n$. For instance:
$$
c_0^{(0)}=1, \quad c_0^{(1)}=-e^{-\alpha^2} \mbox{  and } c_1^{(1)}=1, \quad c_0^{(2)}=-e^{-2\alpha^2},\, c_1^{(2)}=-e^{-\alpha^2}-e^{-3\alpha^2}, \mbox{  and } c_2^{(2)}=1,
$$
and so on. The same coefficients appear in the expression for $\Psi_n(x)$ which looks like
$$
\Psi_n(x)=\frac{1}{\beta_{n-1}!}\,\left(\frac{-i}{\sqrt{1-e^{-2\alpha^2}}}\right)^n\,\Psi_0(x)\sum_{k=0}^nc_k^{(n)}e^{2i\alpha k x}.
$$
This is due to the fact that $c_k^{(n)}$ does not depend on $\gamma$. We will return on the relation between  these functions and those in \cite{eremel} later on. Now, we check that the two sets $\F_\varphi=\{\varphi_n(x),\,n\geq0\}$ and $\F_\Psi=\{\Psi_n(x),\,n\geq0\}$ are both complete in $\Lc^2(\Bbb R)$. In fact, let us first introduce a third set of functions of $\D$, $\F_\Phi=\{\Phi_n(x),\,n\geq0\}$, where
$$
\Phi_n(x)=e^{2i\alpha n x} \,e^{-\frac{x^2}{2}+x\left(\gamma+\frac{3}{2}\,i\alpha\right)}.
$$
It is possible to see that the set $\F_\Phi$ is complete in $\Lc^2(\Bbb R)$ if and only if $\F_\varphi$ is complete, and this is true if and only if $\F_\Psi$ is complete. To prove now that the set $\F_\Phi$ is complete in $\Lc^2(\Bbb R)$, we consider a square-integrable function $f(x)$ which is orthogonal to all the $\Phi_n(x)$: $I_n(\alpha):=\left<f,\Phi_n\right>=0$, for all $n\geq0$. Here we have written explicitly the dependence of these scalar products on $\alpha$, since $\Phi_n(x)$ explicitly depends on $\alpha$ and this will be useful in the following. We can rewrite $I_n(\alpha)$ as follows:
$$
I_n(\alpha)=\int_{\Bbb R}g(x)\,e^{i\alpha x\left(2n+\frac{3}{2}\right)}\,dx,
$$
where $g(x)=\overline{f(x)} e^{-\frac{x^2}{2}+x\gamma}$. Since $I_n(\alpha)=0$, we also have $\frac{d I_n(\alpha)}{d\alpha}=0$.  Then
$$
0=\frac{d}{d\alpha}\int_{\Bbb R}g(x)e^{i\alpha x\left(2n+\frac{3}{2}\right)}\,dx=\int_{\Bbb R}\frac{\partial}{\partial\alpha}g(x)e^{i\alpha x\left(2n+\frac{3}{2}\right)}\,dx=i\left(2n+\frac{3}{2}\right)\int_{\Bbb R}g(x)xe^{i\alpha x\left(2n+\frac{3}{2}\right)}\,dx
$$
for all $n\geq0$. This follows from the fact that $\frac{d}{d\alpha}$ and $\int_{\Bbb R}$ can be exchanged, see \cite{jones}, pg. 154. Iterating this procedure, we can further check that
$$
\int_{\Bbb R}g(x)x^le^{i\alpha x\left(2n+\frac{3}{2}\right)}\,dx=0,
$$
for all $l,n=0,1,2,\ldots$ and for all $\alpha$. Since the dependence on $\alpha$ is continuous, then, see again \cite{jones}, pg. 153, we also have $\int_{\Bbb R}g(x)x^l\,dx=0$ for all $l\geq0$, which can be written as
$$
\int_{\Bbb R}\left(\overline{f(x)} e^{-\frac{x^2}{4}+x\gamma}\right)\eta_l(x)\,dx=0
$$
for all $l=0,1,2,\ldots$. Here $\eta_l(x)=x^l\,e^{-\frac{x^2}{4}}$. But the set $\{\eta_l(x)\}$ is complete in $\Lc^2(\Bbb R)$, \cite{kolfom}\footnote{Our claim is based on the following result, given in \cite{kolfom}: if $\rho(x)$ is a Lebesgue-measurable function which is different from zero almost everywhere (a.e.) in $\Bbb R$ and if there exist two positive constants $\delta, C$ such that $|\rho(x)|\leq C\,e^{-\delta|x|}$ a.e. in $\Bbb R$, then the set $\left\{x^n\,\rho(x)\right\}$ is complete in $\Lc^2(\Bbb{R})$.}. Hence $\overline{f(x)} e^{-\frac{x^2}{4}+x\gamma}=0$ almost everywhere (a.e.) in $\Bbb R$, so that $f(x)=0$ a.e. as well. This is what we had to prove.

\vspace{2mm}

If we now compare the vacua of $a$ and $b^\dagger$, $\varphi_0(x)$, $\Psi_0(x)$, with the vacuum $\phi_0(x)=\frac{1}{\pi^{1/4}}e^{-\frac{x^2}{2}+\frac{3}{2}\,i\alpha x}$ of $c$,  \cite{eremel}, we easily see that
$$
\frac{\varphi_0(x)}{\phi_0(x)}=e^{\gamma x}, \qquad \mbox{while }\qquad \frac{\Psi_0(x)}{\phi_0(x)}=e^{-\gamma x}.
$$
This same ratios are preserved for larger values of $n$ so that we can write
\be
\varphi_n(x)=S\phi_n(x),\qquad \Psi_n(x)=S^{-1}\phi_n(x),
\label{55}\en
for all $n\geq0$, where $S$ is the multiplication operator $S=e^{\gamma x}$. Of course $S$ is invertible, unbounded with unbounded inverse, and leaves, together with $S^{-1}$, $\D$ stable\footnote{This is a consequence of the definition of $\D$: if $f(x)\in\D$, then $f(x)\in\Sc(\Bbb R)$ and $e^{kx}f(x)\in\Sc(\Bbb R)$ for all real $k$. Calling $f_{\pm1}(x)=S^{\pm 1}f(x)=e^{\pm kx}f(x)$, this is clearly in $\Sc(\Bbb R)$ as well as $e^{kx}f_{\pm1}(x)$. Hence $f_{\pm 1}(x)\in\D$.}. Also, $S=S^\dagger$. From (\ref{55}) the biorthogonality of the vectors in $\F_\varphi$ and $\F_\Psi$ immediately follows. Moreover, see \cite{baginbagbook}, (\ref{55}) also implies that $\F_\varphi$ and $\F_\Psi$ are $\D$-quasi bases, other than being complete in $\Lc^2(\Bbb R)$. What is not evident, is if they are also bases or not. This is because $S$ and $S^{-1}$ are unbounded. As for the operator $\Theta$ in (\ref{38}), this can be easily identified and it turns out that $\Theta=S^{-2}$, and $a$ and $b$ are $\Theta$-conjugate: $af=\Theta^{-1}b^\dagger\Theta f$, for all $f\in\D$. Using the closed form given in \cite{eremel} for $\phi_n(x)$ we can use (\ref{55}) to deduce the analytic expression for $\varphi_n(x)$ and $\Psi_n(x)$. In particular, in this way we recover the functions $\varphi_1(x)$ and $\varphi_2(x)$ deduced before.

\vspace{2mm}

It is now natural to look for the BCs associated to the operators $a$ and $b^\dagger$. It is natural to imagine that these BCs are related to the coherent state found in \cite{eremel} by the operators $S$ and $S^{-1}$. However, rather than exploring this relation, we prefer to use what deduced in Proposition \ref{thm1}, and to show that the existence of the vectors in (\ref{42}) can be explicitly proved. Using (\ref{55}), the expression of $\phi_n(x)$ given in \cite{eremel}, and the fact that $\Psi_n(x)$ can be recovered out of $\varphi_n(x)$ upon replacing $\gamma$ with $-\gamma$, simple computations show that
$$
\|\varphi_n\|^2=\|\Psi_n\|^2=\frac{[n]!e^{\gamma^2}}{(1-q)^n}L_n, \quad \mbox{where }\quad L_n=\sum_{k,l=0}^n (-1)^{k+l}\,\frac{e^{-\alpha^2(k+l+(l-k)^2)}e^{2i\alpha\gamma(l-k)}}{[k]![l]![n-k]![n-l]!}
$$
A very {\em brutal} estimate shows that $L_n\leq(n+1)^2$ so that
$$
\|\varphi_n\|\leq e^{\frac{\gamma^2}{2}}\,(n+1)\,\sqrt{\frac{[n]!}{(1-q)^n}}.
$$
The same bound holds for $\|\Psi_n\|$. Then the assumptions of Proposition \ref{thm1} on the norms of $\varphi_n$ and $\Psi_n$ are satisfied  identifying, for instance $A_\varphi=e^{\frac{\gamma^2}{2}}$, $r_\varphi=\frac{1}{\sqrt{1-q}}$ and $M_n(\varphi)=(n+1)\sqrt{[n]!}$. Indeed we have
$$
M(\varphi)=\lim_{n,\infty}\frac{M_n(\varphi)}{M_{n+1}(\varphi)}=\lim_{n,\infty}\frac{n+1}{\sqrt{[n+1]}\,(n+2)}=\sqrt{1-q}.
$$
Since $\sqrt{1-q}<\frac{1}{\sqrt{1-q}}$, it is clear that $\rho_\varphi=\sqrt{1-q}$. Analogously we find that $\rho_\Psi=\sqrt{1-q}$, so that $\rho=\sqrt{1-q}$ as well. Hence the BCs introduced in (\ref{42}) exist for $|z|<\rho$, and Proposition \ref{thm1} can be applied.

\subsection{A bounded deformation}\label{subsectex2}

Equations in (\ref{55}) show that the vectors $\varphi_n$ and $\Psi_n$ are similar to the $\phi_n$'s, but the similarity is implemented by an unbounded operator with unbounded inverse. This implies, as stated, that the sets $\F_\varphi$ and $\F_\Psi$ are not necessarily bases, even if they turn out to be complete and $\D$-quasi bases. We will discuss here a different example where the basis property is preserved. For this, we will use a stronger version of (\ref{55}), in which the similarity operator will be taken to be bounded with bounded inverse. This will produce, automatically, Riesz bases for the Hilbert space.

The starting point of this example is a pair of vectors $u$ and $v$ of $\Hil=l^2(\Bbb N_0)$ such that $\left<u,v\right>=1$. These vectors define a (non orthogonal) projection operator $P_{u,v}$ as follows: $P_{u,v}f=\left<u,f\right>\,v$. It is easy to check that $P_{u,v}^2=P_{u,v}$, and that $\|P_{u,v}\|=1$. Its adjoint acts on $f\in\Hil$ as follows: $P_{u,v}^\dagger f=\left<v,f\right>\,u$. Hence  $P_{u,v}^\dagger=P_{v,u}$ and, of course, $\|P_{u,v}^\dagger\|=1$ as well. Let us now consider two complex numbers, $\alpha$ and $\beta$, satisfying the equality $\alpha+\beta+\alpha\beta=0$. For instance, one could take $\alpha=i$ and $\beta=-\frac{i+1}{2}$. Then the operator $S=\1+\alpha P_{u,v}$ is bounded and invertible, with bounded inverse $S^{-1}=\1+\beta P_{u,v}$. Moreover we have $S^\dagger=\1+\overline{\alpha}\,P_{v,u}$ and $(S^\dagger)^{-1}=\1+\overline{\beta}\,P_{v,u}$ which are also both bounded, clearly. As in (\ref{55}) we can now use these operators to construct two biorthogonal sets of vectors of $\Hil$, acting on the o.n. basis $\F_e$, see Section \ref{subsetcI1} :
\be
\varphi_n=Se_n=e_n+\alpha\left<u,e_n\right>\,v, \qquad \Psi_n=(S^\dagger)^{-1}e_n=e_n+\overline{\beta}\left<v,e_n\right>\,u.
\label{56}\en
Since both $S$ and $S^{-1}$ are bounded, the sets $\F_\varphi=\{\varphi_n\}$ and $\F_\Psi=\{\Psi_n\}$ are Riesz bases. Now, formula (\ref{22}) can be rewritten as $S^{-1}\varphi_{n+1}=\frac{1}{\beta_n}\,c^\dagger S^{-1}\varphi_n$, for each $n\geq0$, which suggests to introduce a raising operator as $b=Sc^\dagger S^{-1}$. In fact, with this definition we can write $\varphi_{n+1}=\frac{1}{\beta_n}\,b\,\varphi_n$. We will consider the domain of $b$ (and of $a$, see below) later in this section. Now, recalling that $c\,e_0=0$, we also have $c\,S^{-1}\varphi_0=0$. This suggests to introduce a second operator, $a$, as $a=Sc S^{-1}$. In fact, in this way, we first have $a\varphi_0=0$ and, more important, $[a,b]_q=\1$, at least formally. We will make this statement rigorous later on. The operators $a$ and $b$ can now be written as follows:
\be
a=c+\alpha P_{c^\dagger u,v}+\beta P_{u,cv}+\alpha\beta P_{\left<cv,u\right>u,v},\quad
b=c^\dagger+\alpha P_{c u,v}+\beta P_{u,c^\dagger v}+\alpha\beta P_{\left<c^\dagger v,u\right>u,v}.
\label{57}\en
It might be worth noticing that, while $P_{u,v}$ and $P_{v,u}$ are projector operators, those in (\ref{57}) are not, in general. For instance $P_{u,cv}^2\neq P_{u,cv}$, at least if $\left<u,cv\right>\neq1$.

Now, let us call $\D$ the subset of all compactly supported sequences of $l^2(\Bbb N_0)$. This means that $f=(f_0,f_1,f_2,\ldots)\in l^2(\Bbb N_0)$ is in $\D$ only if  $f_j=0$ for all $j$ larger than a certain natural number\footnote{We  say that $f_j$ is zero {\em definitively}.}. Of course, since $\D$ contains the linear span of the $e_n$'s, which form an o.n. basis in $\Hil$, $\D$ is dense in $l^2(\Bbb N_0)$. For convenience from now on we take $u,v\in\D$. This means, see (\ref{56}), that $\varphi_n$ and $\Psi_n$ both belong to $\D$ as well. More generally, since for instance $Sf=f+\alpha\left<u,f\right>\,v$ for all $f\in\Hil$, then if, $f\in\D$,  $Sf\in\D$ as well. Hence $\D$ is stable under the action of $S$ and, for similar reasons, under the action of $S^{-1}$, $S^\dagger$ and $(S^{\dagger})^{-1}$.

It is now easy to check that $a$ and $b$, together with their adjoints, leave $\D$ stable and satisfy $[a,b]_qf=f$, for all $f\in\D$. Moreover, since $\varphi_0$ and $\Psi_0$ both belong to $\D$, it follows that Assumptions $\D$-pq 1. and 2. are both satisfied. Assumption $\D$-pq 3. is satisfied also in its stronger version, since $\F_\varphi$ and $\F_\Psi$ are, as already stated, biorthogonal Riesz bases. The operator $\Theta$ turns out to be $(SS^\dagger)^{-1}$. This is because $\Psi_n=\Theta\varphi_n$, and since $\varphi_n=Se_n$, $\Psi_n=(S^\dagger)^{-1}e_n$. We see that $\Theta$ is bounded, invertible and self-adjoint. Moreover, $\varphi_n$ is an eigenstate of $N=ba$ with eigenvalue $\beta^2_{n-1}$, while $\Psi_n$ is an eigenstate of $N^\dagger$ with the same eigenvalue, and $\Theta$ intertwines between the two.

As for the BCs, these can be easily constructed using Proposition \ref{thm1} since $\|\varphi_n\|\leq 1+|\alpha|$ and $\|\Psi_n\|\leq 1+|\beta|$, which are both uniform in $n$. More in general, since $\|\varphi_n\|\leq\|S\|$ and $\|\Psi_n\|\leq\|S^{-1}\|$, the estimates required by Proposition \ref{thm1} are easily satisfied: in fact, it is enough to put $A_\varphi=\|S\|$, $A_\Psi=\|S^{-1}\|$ and $r_\varphi=r_\Psi=M_n(\varphi)=M_n(\Psi)=1$.

 But, in this example, we can do better than this, due to the fact that $S$ and $S^{-1}$ are bounded. In fact, if we introduce the coherent states for the original quons $(c,c^\dagger)$,
$$
e(z)=N(|z|)\sum_{k=0}^\infty\frac{z^k}{\beta_{k-1}!}\,e_k,
$$
where $N(|z|)$ is defined in (\ref{41}), this is well defined if $|z|<\frac{1}{\sqrt{1-q}}$. Then we get $\varphi(z)=Se(z)$ and $\Psi(z)=(S^\dagger)^{-1}e(z)$. Hence they are Riesz-BCs in the sense of \cite{bagcs3}.

\subsubsection{An example in the example}

Let $I_0, I_1$ and $I_2$ be three finite subsets of $\Bbb N_0$, mutually disjoint, and let us define three vectors
$$
c_j=\sum_{k\in I_j}\gamma_k e_k,
$$
$j=1,2,3$, for some complex numbers $\gamma_k$. In particular we require that $\sum_{k\in I_0}|\gamma_k|^2=1$, which implies that $\|c_0\|=1$. Now, we use these vectors to define $u$ and $v$ and $P_{u,v}$ as before: we take $u=c_0+c_1$ and $v=c_0+c_2$. Then $\left<u,v\right>=1$ and we find that
$$
\varphi_k=\left\{
    \begin{array}{ll}
e_k, \qquad\qquad\quad\quad\,\, k\notin I_0\cup I_1,\\
e_k+\alpha\,\overline{\gamma_k}\,v\qquad\quad\,\, k\in I_0\cup I_1,\\
       \end{array}
        \right.
$$
while
$$
\Psi_k=\left\{
    \begin{array}{ll}
e_k, \qquad\qquad\quad\quad\,\, k\notin I_0\cup I_2,\\
e_k+\overline{\beta\,\gamma_k}\,u\qquad\quad\,\, k\in I_0\cup I_2.\\
       \end{array}
        \right.
$$
A direct computation shows that $\left<\varphi_k,\Psi_l\right>=\delta_{k,l}$, as it should. Moreover, the BCs can be written in the following closed form:
$$
\varphi(z)=e(z)+\alpha\,N(|z|)\Gamma_1(z)\,v,\qquad \Psi(z)=e(z)+\overline{\beta}\,N(|z|)\Gamma_2(z)\,u,
$$
where
$$
\Gamma_1(z)=\sum_{k\in I_0\cup I_1} \frac{z^k\,\overline{\gamma_k}}{\sqrt{\epsilon_k!}}, \qquad \Gamma_2(z)=\sum_{k\in I_0\cup I_2} \frac{z^k\,\overline{\gamma_k}}{\sqrt{\epsilon_k!}}.
$$
Needless to say, these two are finite sums so that there is no convergence issue to consider: we see that the Riesz BCs are, here, related to a single standard coherent state via two additive terms, proportional to the vectors $u$ and $v$ introduced at the beginning of this example.

\section{Conclusions}\label{sectconcl}

This paper continues the analysis of deformation of CCR, CAR, and other (anti)-commutation rules adopted in quantum mechanics during the years, and which proved to be essential in very many situations.

In particular, we have shown how the q-mutation relations can be deformed in a similar way as the one which produces pseudo-fermions, pseudo-bosons and truncated pseudo-bosons. This deformation opens interesting mathematical problems and suggests the use of $\D$-PQs in the context of pseudo-hermitian quantum mechanics (and its relatives). Also, we have shown that BCs states can be associated to $\D$-PQs, and that these states share with ordinary coherent states many of their properties.

Two examples have been discussed in details, one connected with an unbounded similarity map, and the other with a bounded one. In both cases we have been able to produce two biorthogonal sets of vectors, eigenstates of two number-like operators, but the nature of these sets depends on the similarity map appearing in our construction: completeness is not lost, in both cases, while the basis property can only be proved in the second example. However, in the first example, we still get $\D$-quasi bases which, as proved in \cite{baginbagbook}, have nice and useful properties.

\section*{Data accessibility statement}

This work does not have any experimental data.

\section*{Competing interests statement}

I have no competing interests.

\section*{Authors' contributions}

FB cured all the aspects of the paper.

\section*{Acknowledgements}

FB acknowledges  support by the GNFM of Indam. FB also gratefully thanks Prof. Eremin for useful discussions concerning Section \ref{subsectex1}.

\section*{Funding statement}

This work received no financial support.

\end{document}